\documentclass[conference]{IEEEtran}

\usepackage[letterpaper, left=0.680in, right=0.680in, bottom=1.049in, top=0.71in]{geometry}

\usepackage{ifpdf}
\ifCLASSINFOpdf
\usepackage[pdftex]{graphicx}
\graphicspath{{./Figures/}}
\DeclareGraphicsExtensions{.pdf,.jpeg,.png}
\else
\usepackage[dvips]{graphicx}
\DeclareGraphicsExtensions{.pdf}
\usepackage[caption=false, font=footnotesize]{subfig}
\fi

\usepackage{amsthm}
\usepackage{epstopdf}
\usepackage{amsmath}
\usepackage{amssymb}
\usepackage{color}
\usepackage{array}
\usepackage{mathtools}
\usepackage[ruled, vlined, linesnumbered]{algorithm2e}
\usepackage{enumitem}
\usepackage[mathcal]{euscript}
\usepackage{tabularx}
\usepackage{multirow}
\usepackage{booktabs}
\usepackage{makecell}
\usepackage{aligned-overset}
\usepackage{bbm}
\usepackage[caption=false,font=footnotesize]{subfig}
\usepackage{url}
\usepackage{tikz}
\usetikzlibrary{shapes.geometric, arrows.meta, positioning, calc}

\newtheorem{lemma}{Lemma}

\newtheorem{corollary}{Corollary}

\newcommand{\Rsym}{\mathcal{R}_{\mathrm{sym}}}

\usepackage{xcolor}

\definecolor{cresppurple}{rgb}{0.5,0,0.5}

\begin{document}

	\title{RIS-Aided ISAC in Cluttered Environments}

	\author{
		\IEEEauthorblockA{
			Yosefine Triwidyastuti\IEEEauthorrefmark{1},
			Tri~Nhu~Do\IEEEauthorrefmark{1},
			Ngo Hoang Tu\IEEEauthorrefmark{2}, and
			Georges~Kaddoum\IEEEauthorrefmark{3}
		}
		\IEEEauthorblockA{
			\IEEEauthorrefmark{1}Telecom Neural Detection (TND) Lab, Dept. of Electrical Engineering, Polytechnique Montr\'{e}al, Montreal, QC, Canada
		}
		\IEEEauthorblockA{
			\IEEEauthorrefmark{2}Faculty of Information Technology, Van Lang School of Technology, Van Lang University, Ho Chi Minh City, Vietnam
		}
		\IEEEauthorblockA{
			\IEEEauthorrefmark{3}Dept. of Electrical Engineering, \'{E}cole de technologie sup\'{e}rieure (\'{E}TS), University of Quebec, Montreal, QC, Canada
		}
		\IEEEauthorblockA{Emails: yosefine-2.triwidyastuti@polymtl.ca, tri-nhu.do@polymtl.ca, tu.nh@vlu.edu.vn, and georges.kaddoum@etsmtl.ca}
	}

	\maketitle

\begin{abstract}
In this paper, we analyze the performance of a communication-optimized reconfigurable intelligent surface (RIS)-assisted integrated sensing and communication (ISAC) system operating in a cluttered environment where multiple scatterers may interfere with the different types of reflected sensing signals. The RIS phases coherently combine the direct and reflected communication paths at the user equipment, whereas the corresponding radar returns remain generally misaligned. In addition, static scatterers near the radar act as environmental clutter that affects only the sensing function. For the communication link over small-scale fading, we derive an exact ergodic-capacity expression for the no-RIS baseline, a moment-matched Gamma approximation for the RIS-assisted link, and a Jensen upper bound, all of which are interpreted as upper bounds on the rate of the underlying binary phase-shift keying waveform. For sensing, our analysis focuses on the average signal-to-clutter-plus-noise ratio (SCNR) at the direct range-Doppler cell. Specifically, we derive the average powers of the direct, RIS-related, and scatterer returns, which scale as constant, linear, linear, quadratic, and constant, respectively, with the number of RIS elements. We then weigh them by the range and slow-time leakage responses to obtain the SCNR, thereby separating RIS-induced clutter from geometry-governed environmental clutter. Range and velocity estimation are evaluated using resolution-normalized metrics. Our Monte Carlo simulation results validate the analysis and show that zero-Doppler clutter leakage dominates the SCNR.
\end{abstract}

\begin{IEEEkeywords}
ISAC, RIS, PMCW, clutter, Nakagami-$m$ fading, ergodic capacity, moment matching, array gain.
\end{IEEEkeywords}

\section{Introduction}

By allowing radar sensing and data transmission to share a waveform, spectrum, and hardware platform, integrated sensing and communication (ISAC) promises spectral, energy, and cost savings~\cite{BouzabiaTVT2025}. A further degree of freedom is provided by reconfigurable intelligent surfaces (RISs), whose passive elements impose controllable phase shifts on impinging waves~\cite{DoTCOM2021}. To date, RISs have given rise to the RIS-ISAC paradigm, in which the waveform and surface configuration are jointly designed. However, a surface configuration that favors one function may penalize the other, so this tradeoff must be quantified, rather than assumed.

In this context, the present study investigates a monostatic RIS-assisted ISAC system based on a phase-modulated continuous-wave (PMCW) waveform. In contrast to dual-function radar-communication designs that require an additional modulation layer~\cite{VaeziCOMST2026}, this system uses a maximum-length sequence (MLS) to provide the sharp periodic autocorrelation needed for range processing, while binary data are conveyed by phase-shift keying entire MLS pulses. At the radar, the same surface that forms the coherent communication cascade also inevitably produces singly RIS-assisted, RIS self-, and doubly RIS-assisted echoes, each with its own delay, Doppler shift, and residual phase. Previous research has predominantly treated dual-function waveform design and RIS phase optimization in isolation. Consequently, the effect of a purely communication-optimized surface on the integrated sensing function remains poorly understood, especially in a cluttered environment where static scatterers near the radar compete with the target echo. Building on the communication-optimized (CO) RIS-ISAC system of~\cite{Triwidyastuti2026arxiv}, we analyze its ergodic capacity and, for sensing, the average signal-to-clutter-plus-noise ratio (SCNR) at the direct range-Doppler cell under both RIS-induced and environmental clutter. Main contributions of the present study can be summarized as follows:

\begin{itemize}
    \item We derive the UE ergodic capacity over Nakagami-$m$ fading: an exact closed form for the no-RIS baseline (Corollary~\ref{cor_capacity_direct}), a moment-matched Gamma approximation for the RIS-assisted link that tightens as $L$ increases (Corollary~\ref{cor_capacity_ris_approx}), and a closed-form Jensen upper bound (Lemma~\ref{lem_jensen_bound}). All are Gaussian-input capacities, so $C\leq\min(\Rsym,\,C_{\mathrm{erg}})$ for the underlying BPSK waveform.

    \item For sensing, we derive a closed-form average SCNR at the direct range-Doppler cell. To this end, we obtain the average powers of the four RIS-related returns and the $N_{\mathrm{scat}}$ scatterer returns $\mathsf{S}_i$, which scale as $\Theta(1)$, $\Theta(L)$, $\Theta(L)$, $\Theta(L^2)$, and $\Theta(1)$, respectively. We then weigh them by the MLS range and slow-time window responses to characterize the interference entering the SCNR, thereby separating RIS-induced clutter, which grows with $L$, from geometry-governed environmental clutter.
    This is followed by running Monte Carlo simulations to validate the analysis.

    \item We evaluate range and velocity using the resolution-normalized RMSE and the probability of a correct estimate within a prescribed fraction of a resolution cell---metrics that distinguish correct sub-cell estimates from errors that are significant relative to the radar resolution.
\end{itemize}

\section{CO-RIS-ISAC System and Signal Model}
\label{sec_system}

Our setting is the communication-optimized RIS-assisted ISAC architecture previously proposed in~\cite{Triwidyastuti2026arxiv}: one monostatic PMCW radar simultaneously senses a moving point target and serves a UE, while the RIS phases are chosen solely for the communication link. The new element is the environment: the radar operates amid $N_{\mathrm{scat}}$ discrete static scatterers near the radar that reflect the emission directly back to the receiver.

\subsection{Network Topology, RIS Model, and Waveform Descriptions}

\begin{figure}[t]
    \centering
    \resizebox{\columnwidth}{!}{%
    \begin{tikzpicture}[
        font=\footnotesize,
        comm/.style={-{Stealth[length=1.8mm]},solid,semithick,teal!65!black},
        sens/.style={{Stealth[length=1.8mm]}-{Stealth[length=1.8mm]},dash dot,semithick,red!55!black},
        lbl/.style={font=\scriptsize,inner sep=1pt,fill=white,fill opacity=.8,text opacity=1}
    ]
        \coordinate (R) at (0.55,0.60);
        \coordinate (I) at (2.35,2.90);
        \coordinate (U) at (6.30,0.42);
        \coordinate (T) at (7.00,2.30);

        \draw[comm] (R) -- node[lbl,below,pos=.45]        {$h_{\mathsf{RU}},\,d_{\mathsf{RU}}$} (U);
        \draw[comm] (R) -- node[lbl,sloped,above,pos=.50] {$h_{\mathsf{RI}},\,d_{\mathsf{RI}}$} (I);
        \draw[comm] (I) -- node[lbl,sloped,above,pos=.72] {$h_{\mathsf{IU}},\,d_{\mathsf{IU}}$} (U);
        \draw[sens] (R) -- node[lbl,sloped,below,pos=.40] {$h_{\mathsf{RT}},\,d_{\mathsf{RT}}$} (T);
        \draw[sens] (I) -- node[lbl,sloped,above,pos=.45] {$h_{\mathsf{IT}},\,d_{\mathsf{IT}}$} (T);

        \draw[thick,fill=teal!25] (R) circle (0.11);
        \draw[semithick,teal!65!black] (R) circle (0.19);
        \fill[teal!65!black] (R) circle (0.035);
        \node[below=6pt,font=\scriptsize] at (R) {$\mathsf{R}$ (radar)};

        \begin{scope}[shift={(I)},rotate=-58]
            \draw[fill=olive!20,draw=black,semithick] (-0.62,-0.13) rectangle (0.62,0.13);
            \foreach \x in {-0.50,-0.25,0,0.25,0.50}
                \draw[fill=olive!60!black] (\x-0.085,-0.075) rectangle (\x+0.085,0.075);
        \end{scope}
        \node[font=\scriptsize,anchor=west] at ($(I)+(0.55,0.42)$) {$\mathsf{I}$ (RIS, $L$ elements)};

        \draw[thick,fill=orange!45] (U) circle (0.10);
        \node[below=5pt,font=\scriptsize] at (U) {$\mathsf{U}$ (UE)};

        \node[diamond,draw=black,fill=magenta!45,minimum size=11pt,inner sep=0pt] at (T) {};
        \draw[-{Stealth[length=2mm]},very thick] ($(T)+(0.16,-0.10)$) -- ++(0.55,-0.45)
              node[below,font=\scriptsize]{$\mathbf v_{\mathsf T}$};
        \node[above=3pt,font=\scriptsize] at (T) {$\mathsf{T}$ (target, $\sigma$)};

        \node[anchor=west,font=\scriptsize] (cfg) at (-0.20,3.80)
              {$\boldsymbol{\Phi}^{\star}=\mathrm{diag}\!\big(\kappa_l e^{j\vartheta_l^{\star}}\big),\ l=1,\dots,L$};
        \draw[-{Stealth[length=1.6mm]},gray,shorten >=3pt] (cfg.south) to[bend right=14] ($(I)+(-0.42,0.32)$);

        \coordinate (S1) at (1.80,1.30);
        \coordinate (S2) at (1.15,1.85);
        \draw[sens] (R) -- node[lbl,sloped,below,pos=.60] {$d_{\mathsf{RS}_i}$} (S1);
        \node[star,star points=5,star point ratio=2.2,draw=black,fill=gray!55,minimum size=7pt,inner sep=0pt] at (S1) {};
        \node[star,star points=5,star point ratio=2.2,draw=black,fill=gray!55,minimum size=7pt,inner sep=0pt] at (S2) {};
        \node[lbl] at ($(S1)+(0.05,-0.26)$) {$\mathsf{S}_i$};

        \begin{scope}[shift={(-0.60,-0.62)}]
            \draw[-{Stealth[length=1.4mm]}] (0,0) -- (0.50,0) node[right,font=\scriptsize]{$x$};
            \draw[-{Stealth[length=1.4mm]}] (0,0) -- (0,0.50) node[above,font=\scriptsize]{$y$};
        \end{scope}

        \begin{scope}[shift={(0.10,-1.05)}]
            \draw[comm] (0,0) -- (0.55,0);
            \node[right=1pt,font=\scriptsize] at (0.55,0){communication link (Nakagami-$m$)};
            \draw[sens] (5.20,0) -- (5.75,0);
            \node[right=1pt,font=\scriptsize] at (5.75,0){sensing link (LoS)};
            \node[star,star points=5,star point ratio=2.2,draw=black,fill=gray!55,minimum size=6pt,inner sep=0pt] at (0.27,-0.42) {};
            \node[right=1pt,font=\scriptsize] at (0.55,-0.42){static scatterers $\mathsf{S}_i$, $i=1,\ldots,N_{\mathrm{scat}}$ (clutter, RCS $\sigma_{s,i}$)};
        \end{scope}
    \end{tikzpicture}%
    }
    \caption{Network topology (top view): static scatterers $\mathsf{S}_i$ near the radar reflect the emission back to $\mathsf R$ only.}
    \label{fig_topology}
\end{figure}

Fig.~\ref{fig_topology} shows the planar deployment: a monostatic radar $\mathsf{R}$, a UE $\mathsf{U}$, a point target $\mathsf{T}$ moving with velocity $\mathbf v_{\mathsf T}$, a RIS $\mathsf{I}$, and $N_{\mathrm{scat}}$ stationary point reflectors $\mathsf{S}_i$, $i=1,\ldots,N_{\mathrm{scat}}$, near the radar that constitute environmental clutter. Furthermore, $d_{\mathsf{AB}}$ denotes the $\mathsf A$-$\mathsf B$ distance. The RIS comprises $L$ passive elements that apply reflection coefficients $\theta_l = \kappa_l e^{j\vartheta_l}$, with amplitude $\kappa_l\in(0,1]$ and controllable phase $\vartheta_l\in[0,2\pi)$, $l=1,\ldots,L$, collected in $\boldsymbol{\Phi}=\operatorname{diag}([\theta_1,\ldots,\theta_L])$.

The emission is the MLS-based PMCW waveform previously reported in~\cite{Triwidyastuti2026arxiv}. A bipolar MLS $\mathbf s\in\{\pm1\}^N$ of length $N=2^{k_{\mathrm{MLS}}}-1$ ($k_{\mathrm{MLS}}$ is the MLS order, distinct from the range/delay bin index $k$ in Section~\ref{sec_receiver}) repeats periodically. Each of the $M$ information bits $b_m\in\{0,1\}$ in a coherent processing interval (CPI) is carried by one complete MLS pulse through the BPSK symbol $d_m=1-2b_m$; therefore, sensing and communication share a single emission. The chip duration is $T_c=1/B$ (for unit-energy rectangular NRZ chips with chip-matched filtering, where $B$ is the nominal chip-rate bandwidth), the chip energy is $E_c$, the PRI is $T_{\mathrm{PRI}}=NT_c$, and, because one bit occupies all $N$ chips, the bit energy is $E_b=NE_c$. The frame structure and range/Doppler resolutions follow~\cite{Triwidyastuti2026arxiv}.

\subsection{Channel Models}

\emph{Communication links:} Every communication link between nodes $\mathsf A$ and $\mathsf B$ follows the composite small-/large-scale model~\cite[Eq.~(3.13)]{Goldsmith2005}:
\begin{align}
    h_{\mathsf{AB}}
    =
    \sqrt{\beta_{\mathsf{AB}}}\,a_{\mathsf{AB}}e^{j\phi_{\mathsf{AB}}},
    \quad
    \beta_{\mathsf{AB}}=\Psi_{\mathsf{AB}}(d_{\mathsf{AB}}/d_0)^{-\alpha_{\mathsf{AB}}},
    \label{eq_comm_channel}
\end{align}
where $\beta_{\mathsf{AB}}$ is the large-scale gain, $a_{\mathsf{AB}}\sim\operatorname{Nakagami}(m_{\mathsf{AB}},1)$ is the Nakagami fading amplitude, and $\phi_{\mathsf{AB}}\sim\mathcal U[0,2\pi)$ is the uniform fading phase. This model applies to $h_{\mathsf{RU}}$ and, for each element $l$, to $h_{\mathsf{RI},l}$ and $h_{\mathsf{IU},l}$; in the far field, $\beta_{\mathsf{RI},l}\!\simeq\!\beta_{\mathsf{RI}}$ and $\beta_{\mathsf{IU},l}\!\simeq\!\beta_{\mathsf{IU}}$.

\emph{Sensing links:} The radar-target and RIS-target links are deterministic line-of-sight channels obeying the radar range equation~\cite{Triwidyastuti2026arxiv,Richards2014}; the target's fluctuating scattering is carried by the separate factor $\rho_{\mathsf T}=\sqrt{\sigma}\,e^{j\phi_\sigma}$, with $\sigma$ being the radar cross section (RCS)~\cite[cf.~Eq.~(2.25)]{Goldsmith2005}. With the one-way coefficients $h_{\mathsf{RT}}=\sqrt{\beta_{\mathsf{RT}}}\,e^{-j\frac{2\pi}{\lambda}d_{\mathsf{RT}}}$ and, per RIS element, $h_{\mathsf{IT},l}=\sqrt{\beta_{\mathsf{IT}}}\,e^{-j\frac{2\pi}{\lambda}d_{\mathsf{IT},l}}$ ($\beta_{\mathsf{IT},l}\simeq\beta_{\mathsf{IT}}$ in the far field), reciprocity yields the monostatic direct-target coefficient as follows:
\begin{align}
    \xi_0 = \rho_{\mathsf T}h_{\mathsf{RT}}^2 = \rho_{\mathsf T}\beta_{\mathsf{RT}}e^{-j\frac{4\pi}{\lambda}d_{\mathsf{RT}}},
    \qquad
    |\xi_0|^2=\sigma\beta_{\mathsf{RT}}^2,
    \label{eq_direct_target_coefficient}
\end{align}
with $\beta_{\mathsf{RT}}$ following the same log-distance model as $\beta_{\mathsf{AB}}$ above~\cite{Triwidyastuti2026arxiv}.

\emph{Scatterer links:} Analogously, each static scatterer $\mathsf S_i$ is observed through the deterministic LoS coefficient $h_{\mathsf{RS}_i}=\sqrt{\beta_{\mathsf{RS}_i}}\,e^{-j\frac{2\pi}{\lambda}d_{\mathsf{RS}_i}}$, where $\beta_{\mathsf{RS}_i}$ follows the same log-distance model and $e^{-j\frac{2\pi}{\lambda}d_{\mathsf{RS}_i}}$ accounts for propagation over the distance $d_{\mathsf{RS}_i}$~\cite[cf.~Eq.~(2.6)]{Goldsmith2005}. Its scattering is represented by $\rho_{s,i}=\sqrt{\sigma_{s,i}}\,e^{j\phi_{s,i}}$, where $\sigma_{s,i}$ is the radar cross section of $\mathsf S_i$ and $\phi_{s,i}\sim\mathcal U[0,2\pi)$ is its scattering phase.

\subsection{Communication-Optimized RIS Configuration}

Following~\cite{Triwidyastuti2026arxiv}, the surface is configured for the UE alone, each phase being set as follows:
\begin{align}
    \vartheta_l^\star
    =
    \phi_{\mathsf{RU}}
    -\phi_{\mathsf{RI},l}
    -\phi_{\mathsf{IU},l}
    \pmod{2\pi},
    \quad
    \theta_l^\star=\kappa_l e^{j\vartheta_l^\star}.
    \label{eq_comm_ris_phase}
\end{align}
According to~\cite[Lemma~1]{Triwidyastuti2026arxiv}, the choice in Eq.~\eqref{eq_comm_ris_phase} rotates all cascaded paths onto the phase of the direct path. Thus, the effective channel $h_{\mathrm{eff}}=h_{\mathsf{RU}}+\sum_{l=1}^{L}h_{\mathsf{IU},l}\theta_l^\star h_{\mathsf{RI},l}$ factors as $h_{\mathrm{eff}}=e^{j\phi_{\mathsf{RU}}}A_c$, where the amplitude is real and nonnegative:
\begin{align}
    A_c
    \triangleq
    |h_{\mathrm{eff}}|
    =
    \sqrt{\beta_{\mathsf{RU}}}\,a_{\mathsf{RU}}
    +
    \sqrt{\beta_{\mathsf{RI}}\beta_{\mathsf{IU}}}
    \textstyle\sum_{l=1}^{L}
    \kappa_l\,a_{\mathsf{RI},l}a_{\mathsf{IU},l}.
    \label{eq_effective_amplitude}
\end{align}

After unit-energy chip-matched filtering, the noise at the UE is i.i.d.\ $\mathcal{CN}(0,N_0)$, while the channels are block fading (constant over one CPI). The UE chip-rate observation is unchanged from~\cite{Triwidyastuti2026arxiv}: the direct and RIS-assisted components arrive lumped at the common direct-path delay bin $\ell_u$, scaled by $\sqrt{E_c}\,h_{\mathrm{eff}}$ and spread by the data-modulated MLS.

\subsection{Composite Radar Return}
\label{subsec_radar_return}

According to~\cite[Lemma~2]{Triwidyastuti2026arxiv}, under channel reciprocity and the stop-and-hop approximation, the monostatic radar collects four RIS-related returns indexed by $i=0,\ldots,3$ (the direct, singly RIS-assisted, static RIS self-, and doubly RIS-assisted returns), to which the cluttered environment adds $N_{\mathrm{scat}}$ scatterer returns as follows:
\begin{align}
    &r_{\mathrm{rad}}[n,\!m]
    \!\!=\!\!
    \sqrt{E_c} \!
    \sum_{i=0}^{3} \!
    \xi_i\,
    s[(n \!-\! \ell_i \!+\! 1)\!\bmod\! N]\,d_m
    e^{j2\pi f_{D,i}mT_{\mathrm{PRI}}}
    \nonumber\\
    & +
    \sqrt{E_c}\!
    \sum_{i=1}^{N_{\mathrm{scat}}}\!
    \xi_{s,i}\,
    s[(n-\ell_{s,i}+1)\!\bmod\! N]\,d_m
    +
    z_{\mathrm{rad}}[n,m],
    \label{eq_radar_echo}
\end{align}
where $z_{\mathrm{rad}}[n,m]$ is i.i.d.\ $\mathcal{CN}(0,N_0)$ noise and, with the one-way RIS-assisted cascade $q_{\mathsf{RIT}}=\sum_l\theta_l^\star h_{\mathsf{RI},l}h_{\mathsf{IT},l}$, the effective coefficients are then:
\begin{align}
    \xi_0 &= \rho_{\mathsf T}\beta_{\mathsf{RT}}e^{-j\frac{4\pi}{\lambda}d_{\mathsf{RT}}},
    & \xi_1 &= 2\rho_{\mathsf T}h_{\mathsf{RT}}q_{\mathsf{RIT}},\nonumber\\
    \xi_2 &= \textstyle\sum_l\theta_l^\star h_{\mathsf{RI},l}^2,
    & \xi_3 &= \rho_{\mathsf T}q_{\mathsf{RIT}}^2.
    \label{eq_four_returns}
\end{align}
Each return $i$ arrives at its geometric round-trip delay $\tau_i$ (e.g., $\tau_0=2d_{\mathsf{RT}}/c_0$), which is mapped to chip-lag bin $\ell_i$, and has a Doppler shift $f_{D,i}$ determined by the target's radial velocities $v_r^X=\mathbf v_{\mathsf T}^{\mathsf T}\mathbf u_{X\rightarrow\mathsf T}$ along the $X\!\to\!\mathsf T$ directions, $X\in\{\mathsf R,\mathsf I\}$. In particular, $f_{D,0}=-2v_r^{\mathsf R}/\lambda$, and the RIS self-return is static, with $f_{D,2}=0$. The remaining delays and Doppler shifts, which are not explicitly needed below, are listed in~\cite[Lemma~2]{Triwidyastuti2026arxiv}.

\emph{Environmental clutter:} By reciprocity of the scatterer link, each static scatterer $\mathsf S_i$ contributes the monostatic coefficient
\begin{align}
    \xi_{s,i}
    \!=\!
    \rho_{s,i}h_{\mathsf{RS}_i}^2
    \!\!=\!
    \rho_{s,i}\,\beta_{\mathsf{RS}_i}\,e^{-j\frac{4\pi}{\lambda}d_{\mathsf{RS}_i}},\quad
    |\xi_{s,i}|^2 \!=\! \sigma_{s,i}\beta_{\mathsf{RS}_i}^2,
    \label{eq_scatterer_coefficient}
\end{align}
with delay $\tau_{s,i}=\tfrac{2d_{\mathsf{RS}_i}}{c_0}$ (mapped to the chip-lag bin $\ell_{s,i}$) and zero Doppler shift, since the scatterers are stationary. The scatterers are assumed to interact only with the radar emission: the second-order RIS-scatterer and scatterer-target paths are neglected, and the scatterers do not affect the UE observation, so the communication model of~\cite{Triwidyastuti2026arxiv} and the RIS configuration in Eq.~\eqref{eq_comm_ris_phase} remain unchanged.

\section{Receiver Processing}
\label{sec_receiver}

\subsection{Sensing Function: Range-Doppler Processing}
\label{subsec_sensing_processing}

As in~\cite{Triwidyastuti2026arxiv}, the radar first removes its own BPSK symbols, $y[n,m]=d_m^{*}r_{\mathrm{rad}}[n,m]$ (the noise statistics are unchanged because $|d_m|=1$); then circularly correlates each pulse with the MLS as follows:
\begin{align}
    g[k,m] = \textstyle \frac{1}{N}  \sum_{n=1}^{N} y[n,m]\,s[(n-k+1)\bmod N].
    \label{eq_range_corr}
\end{align}
As the normalized MLS periodic autocorrelation equals $R_s[\delta]=1$ at $\delta=0$ and $-1/N$ elsewhere, a return at delay $\ell_i$ peaks at bin $k=\ell_i$, off-peak lags are suppressed, and the post-correlation noise variance is $N_0/N$. Bin $k$ maps to range $R_k=c_0(k-1)T_c/2$ ($k=1$ at zero range), with spacing $\Delta R=c_0T_c/2$. An optional slow-time window $w[m]$ and a coherent-gain-normalized $M$-point DFT for each range bin then yield the following range-Doppler map:
\begin{align}
    G[k,p]
    = \textstyle
    \frac{1}{\sum_{m=1}^M w[m]}\sum_{m=1}^M w[m]\,g[k,m]\,e^{-j2\pi pm/M},
    \label{eq_rd_map}
\end{align}
with Doppler resolution $\Delta f_D=1/(MT_{\mathrm{PRI}})$ and power $P_{\mathrm{RD}}[k,p]=|G[k,p]|^2$. Of the returns in Eq.~\eqref{eq_radar_echo}, only the direct return $\xi_0$ is unaffected by RIS phase misalignment; under the phases in Eq.~\eqref{eq_comm_ris_phase}, the RIS-assisted returns $\xi_1,\xi_2,\xi_3$ are mis-phased across elements and appear as clutter in $P_{\mathrm{RD}}$~\cite{Triwidyastuti2026arxiv}. The scatterer returns $\xi_{s,i}$ peak at their range bins $\ell_{s,i}$ on the zero-Doppler line, merging with the static RIS self-return $\xi_2$ into a common clutter ridge (see Section~\ref{sec_results} and Fig.~\ref{fig_rdmap}).

Targets are declared by a two-dimensional cell-averaging CFAR (CA-CFAR) detector~\cite{Barkat2005}. A local background level $\widehat Z[k,p]$ is formed by averaging $N_{\mathrm{tr}}$ training cells in a ring around each cell under test, excluding a guard region with range and Doppler half-widths $G_r,G_d$ and using training half-widths $T_r,T_d$. A detection is declared if $P_{\mathrm{RD}}[k,p] > \alpha_{\mathrm{CFAR}}\widehat Z[k,p]$ and $[k,p]$ is a local maximum, where $\alpha_{\mathrm{CFAR}} = N_{\mathrm{tr}}(P_{\mathrm{FA}}^{-1/N_{\mathrm{tr}}}-1)$ sets the target false-alarm probability $P_{\mathrm{FA}}$~\cite{Barkat2005} (unrelated to the path-loss exponents $\alpha_{\mathsf{AB}}$). CA-CFAR assumes a locally homogeneous background, so a scatterer peak inside the training ring raises $\widehat Z[k,p]$ and can mask nearby detections. In the considered geometry (i.e., near-radar scatterers on the zero-Doppler line and a distant moving target), the training ring around the direct-return cell remains clutter-free. The detected direct-return bin $(\widehat k,\widehat p)$, where $\widehat p$ is the signed, zero-centered Doppler bin ($p\in\{-\lfloor M/2\rfloor,\ldots,\lceil M/2\rceil-1\}$), is then refined to sub-cell accuracy by separately applying parabolic interpolation to $10\log_{10}P_{\mathrm{RD}}$ around the peak in the range and Doppler dimensions. This yields the range and radial-velocity estimates as follows:
\begin{align}
    \widehat R = \frac{c_0(\widehat k-1)T_c}{2},
    \qquad
    \widehat v_r^{\mathsf R} = -\frac{\lambda\widehat p}{2MT_{\mathrm{PRI}}},
    \label{eq_rd_estimates}
\end{align}
which are then used for the estimation-accuracy metrics of Section~\ref{subsec_sensing_metrics}.

\subsection{Communication Function: Coherent BPSK Detection}
\label{subsec_comm_detection}

The UE acquires the direct-path code delay noncoherently and despreads each pulse with the delay-aligned MLS, exactly as previously described in~\cite{Triwidyastuti2026arxiv}; under correct acquisition ($\widehat\ell_u=\ell_u$), the per-bit observation reduces to
\begin{align}
    u_m
    =
    \sqrt{E_c}\,h_{\mathrm{eff}}\,d_m+\nu_m,
    \qquad
    \nu_m\sim\mathcal{CN}(0,N_0/N).
    \label{eq_um}
\end{align}
The effective channel $h_{\mathrm{eff}}$ is estimated by least squares from an $M_{\mathrm{pl}}$-symbol pilot prefix, and coherent BPSK detection forms $\Lambda_m=\operatorname{Re}\{\widehat h_{\mathrm{eff}}^{*}u_m\}$~\cite{Triwidyastuti2026arxiv}. Since one bit spans all $N$ chips ($E_b=NE_c$), the post-despreading bit SNR at the UE is the quantity governing the ergodic capacity analyzed in Section~\ref{sec_performance_analysis}.

If pilot overhead is included in net throughput, we multiply the data rate by $(M-M_{\mathrm{pl}})/M$.

\section{Performance Analysis}
\label{sec_performance_analysis}

\subsection{Sensing SNR and Estimation Metrics}
\label{subsec_sensing_metrics}

Let $w[m]$ be the slow-time window in Eq.~\eqref{eq_rd_map}, with coherent processing efficiency
$
    L_w
    \triangleq
    \frac{\left|\sum_{m=1}^{M}w[m]\right|^2}
         {M\sum_{m=1}^{M}|w[m]|^2}\in(0,1],
$
where $L_w=1$ for a rectangular window.
For the direct target return, range compression in Eq.~\eqref{eq_range_corr} contributes a coherent gain of $N$, and windowed slow-time integration in Eq.~\eqref{eq_rd_map} contributes a further factor of $ML_w$. Thus, the integrated sensing SNR at an on-grid range-Doppler cell is $\chi_0 = NM L_w\frac{E_c}{N_0}|\xi_0|^2 = NM L_w\frac{E_c}{N_0}\sigma\beta_{\mathsf{RT}}^2$~\cite[Lemma~3]{Triwidyastuti2026arxiv}, with $|\xi_0|^2$ from Eq.~\eqref{eq_direct_target_coefficient}.

The phases in Eq.~\eqref{eq_comm_ris_phase} that combine coherently at the UE via Eq.~\eqref{eq_effective_amplitude} leave the sensing phases of $\xi_1,\xi_2,\xi_3$ in Eq.~\eqref{eq_four_returns} misaligned, so these returns add non-coherently and never enhance $\chi_0$~\cite{Triwidyastuti2026arxiv}: the static self-return $\xi_2$ contributes zero-Doppler clutter, while $\xi_0$ is independent of the RIS. Scatterer returns $\xi_{s,i}$ of Eq.~\eqref{eq_scatterer_coefficient} add further zero-Doppler clutter, independent of the RIS and of $L$. Their impact on direct-target processing is set by their average powers and range-Doppler leakage into the direct cell.

\subsubsection{Average Return Powers}
Let $P_i\triangleq\mathbb E[|\xi_i|^2]$. Under the communication-optimized phases, let us define $v_l \triangleq \kappa_l^2 \beta_{\mathsf{RI},l} \beta_{\mathsf{IT},l}$ and $u_l \triangleq \kappa_l^4 (\beta_{\mathsf{RI},l} \beta_{\mathsf{IT},l})^2 \mathcal M_4(m_{\mathsf{RI}})$, where $\mathcal M_4(m_\mathsf{RI})=\mathbb{E}[a_\mathsf{RI}^4]=(m_\mathsf{RI}+1)/m_\mathsf{RI}$~\cite[cf.~Eq.~(2.169)]{Barkat2005}. The independence of the residual element phases gives the following:
\begin{align*}
    P_0 &= \sigma\beta_{\mathsf{RT}}^2,
    \hspace{2pt} P_1 = 4\sigma\beta_{\mathsf{RT}}\sum_{l=1}^{L}v_l,
    \hspace{2pt} P_2 = \sum_{l=1}^{L}\kappa_l^2\beta_{\mathsf{RI},l}^2
           \mathcal M_4(m_{\mathsf{RI}}), \\
    P_3 & \textstyle = \sigma \left( \sum_{l=1}^{L}u_l
			+2 \left( \left( \sum_{l=1}^L v_l \right)^2 - \sum_{l=1}^L v_l^2 \right) \right).
\end{align*}
In addition, each scatterer return has the following deterministic average power:
\begin{align}
    P_{s,i} \triangleq \mathbb E[|\xi_{s,i}|^2] = \sigma_{s,i}\,\beta_{\mathsf{RS}_i}^2,
    \qquad i=1,\ldots,N_{\mathrm{scat}}.
    \label{eq_scatterer_power}
\end{align}
For identical far-field gains, $P_0=\Theta(1)$, $P_1=\Theta(L)$, $P_2=\Theta(L)$, and $P_3=\Theta(L^2)$, whereas $P_{s,i}=\Theta(1)$ is independent of $L$: environmental clutter is governed purely by geometry. Since the two-way gain $\beta_{\mathsf{RS}_i}^2$ scales as $d_{\mathsf{RS}_i}^{-2\alpha}$, near-radar scatterers of modest RCS can nevertheless dominate the distant target return $P_0$; the expressions are validated against Monte Carlo averages in Section~\ref{sec_results}.

\subsubsection{Direct-Cell Interference}
Let $A_{R,i}$ and $A_{D,i}$ denote the normalized MLS range response and slow-time window response of return $i$, respectively, evaluated at the direct-return cell $(k_0,p_0)$. Furthermore, let $A_{R,s,i}$ and $A_{D,s,i}$ be the corresponding responses of scatterer return $i$. The average RIS- and clutter-induced interference at that cell is then computed as follows:
\begin{align}
    P_{\mathrm{int},0}
    =& \textstyle E_c\sum_{i=1}^{3}P_i|A_{R,i}|^2|A_{D,i}|^2
    \nonumber\\
    & \textstyle +E_c\sum_{i=1}^{N_{\mathrm{scat}}}P_{s,i}|A_{R,s,i}|^2|A_{D,s,i}|^2.
    \label{eq_direct_cell_interference}
\end{align}
The scatterer terms take a particularly simple form. Because the scatterers lie near the radar while the target is distant, $\ell_{s,i}\neq\ell_0$, and the MLS periodic autocorrelation gives exactly $|A_{R,s,i}|^2=1/N^2$. Furthermore, because the scatterers are stationary, $A_{D,s,i}$ is the window response at the target Doppler bin, which is the same zero-Doppler leakage factor as that of the RIS self-return $\xi_2$. Therefore, all static clutter (the RIS self-return and scatterers) enters through a common zero-Doppler leakage mechanism. The average output SCNR is thus
\begin{align}
    \textstyle \overline\chi_{\mathrm{SCNR},0}
    =\frac{E_cP_0}
    {N_0/(NM L_w)+P_{\mathrm{int},0}}.
\end{align}
Accordingly, a strong RIS or scatterer return causes little direct-cell interference when it is sufficiently separated in range or Doppler, whereas nearly coincident returns reduce the SCNR.

\subsubsection{Estimation Accuracy Metrics}
\label{subsubsec_estimation_metrics}

Beyond detection, we evaluate the \emph{correctness} of the range and velocity estimates relative to the resolution cell, rather than the raw error magnitude. Let $\widehat R,\widehat v$ be the sub-cell estimates in Eq.~\eqref{eq_rd_estimates} (with $v\triangleq v_r^{\mathsf R}$ for brevity), and let $\Delta R=c_0T_c/2$ and $\Delta v=\lambda/(2MT_{\mathrm{PRI}})$ be the range and velocity resolution-cell sizes, respectively. The normalized RMSE is
\begin{align}
    \mathrm{NRMSE}_R \!=\! \frac{\sqrt{\mathbb E[(\widehat R \!-\! R)^2]}}{\Delta R},
    \quad
    \mathrm{NRMSE}_v \!=\! \frac{\sqrt{\mathbb E[(\widehat v \!-\! v)^2]}}{\Delta v},
    \label{eq_nrmse}
\end{align}
where $\mathrm{NRMSE}<1$ indicates sub-resolution (i.e., correct) accuracy. The probabilities of correct range and velocity estimates at tolerance $\varepsilon$ (in resolution cells) are
\begin{align}
    \Pr\nolimits_R(\varepsilon) &= \Pr\!\big(|\widehat R-R|\le\varepsilon\Delta R\big),\label{eq_pcorr_R}
    \\
    \Pr\nolimits_v(\varepsilon) &= \Pr\!\big(|\widehat v-v|\le\varepsilon\Delta v\big).
    \label{eq_pcorr_v}
\end{align}
These probabilities are evaluated at a half-cell tolerance $\varepsilon=1/2$ and a tight sub-bin tolerance $\varepsilon=0.1$. Eq.~\eqref{eq_nrmse} and \eqref{eq_pcorr_R}--\eqref{eq_pcorr_v}, evaluated by Monte Carlo simulation using the estimator in Eq.~\eqref{eq_rd_estimates} of Section~\ref{subsec_sensing_processing}, are our main sensing performance metrics (see Section~\ref{sec_results}, Figs.~\ref{fig_nrmse} and~\ref{fig_pcorr}).

\subsection{Communication SNR and Reused Statistical Model}
\label{subsec_snr_amplitude}

The post-despreading bit SNR at the UE is computed as follows:
\begin{align}
    \textstyle \gamma_b
    &=
    \frac{E_b|h_{\mathrm{eff}}|^2}{N_0}
    =
    \overline\gamma A_c^2,
    \qquad
    \overline\gamma \triangleq\frac{E_b}{N_0} =N\frac{E_c}{N_0},
    \label{eq_bit_snr}
\end{align}
where $A_c$ is the effective amplitude in Eq.~\eqref{eq_effective_amplitude}. Write $A_c=X_0+\sum_{l=1}^{L}X_l$, where the direct component $X_0\triangleq\sqrt{\beta_{\mathsf{RU}}}\,a_{\mathsf{RU}}$ and the RIS-assisted components $X_l\triangleq\kappa_l\sqrt{\beta_{\mathsf{RI}}\beta_{\mathsf{IU}}}\,a_{\mathsf{RI},l}a_{\mathsf{IU},l}$ are mutually independent. Their component moments $\mu_{i,r}\triangleq\mathbb E[X_i^r]$, $i=0,\ldots,L$, are available in closed form from the Nakagami raw moments~\cite[Eq.~(2.169)]{Barkat2005}. Then, the raw amplitude moments $\mathcal A_{L,r}\triangleq\mathbb E[A_c^r]$, $r\leq4$, follow exactly from the amplitude-moment recursion in~\cite[Lemma~4]{Triwidyastuti2026arxiv}. Matching the mean and variance of $\gamma_b=\overline\gamma A_c^2$ to a Gamma law gives the moment-matched model of~\cite{Triwidyastuti2026arxiv},
\begin{align}
    \textstyle \gamma_b\mathrel{\dot\sim}\operatorname{Gamma}(\zeta_\gamma,\eta_\gamma),
    \quad
    \zeta_\gamma=\frac{\mu_\gamma^2}{\sigma_\gamma^2},
    \quad
    \eta_\gamma=\frac{\sigma_\gamma^2}{\mu_\gamma},
    \label{eq_gamma_fit}
\end{align}
where $\mathrel{\dot\sim}$ denotes approximation in distribution, $\mu_\gamma=\overline\gamma\,\mathcal A_{L,2}$, and $\sigma_\gamma^2=\overline\gamma^2(\mathcal A_{L,4}-\mathcal A_{L,2}^2)$; the fit is not exact for $L>0$, but tightens as $L$ grows and $A_c$ aggregates more independent terms. For the no-RIS case, from~\cite{Triwidyastuti2026arxiv}, we obtain the following:
\begin{align}
    \textstyle \gamma_{b,0}=\overline\gamma\beta_{\mathsf{RU}}a_{\mathsf{RU}}^2
    \sim\operatorname{Gamma}\!\Big(m_{\mathsf{RU}},\tfrac{\overline\gamma\beta_{\mathsf{RU}}}{m_{\mathsf{RU}}}\Big).
    \label{eq_noris_dist}
\end{align}
The statistical models in Eq.~\eqref{eq_gamma_fit} and \eqref{eq_noris_dist} are reused from~\cite{Triwidyastuti2026arxiv}; the subsequent EC analysis is the contribution of the present paper.

\subsection{Ergodic Capacity Analysis}
\label{subsec_capacity_def}

Each despread observation in Eq.~\eqref{eq_um} is one channel use per PRI, so the symbol rate is $\Rsym=1/T_{\mathrm{PRI}}=B/N$, not the chip rate $B=N\Rsym$, which would overcount by the spreading factor $N$. With fixed power and rate and receiver-only CSI (the pilot-based $\widehat h_{\mathrm{eff}}$ of Section~\ref{sec_receiver}), the per-state rate $\Rsym\log_2(1+\gamma_b)$~\cite[Eq.~(5.12)]{Tse2005} is \emph{not} achievable moment to moment~\cite[Section~4.2.3]{Goldsmith2005}; the relevant measure is ergodic capacity $C_{\mathrm{erg}} = \Rsym\,\mathbb E[\log_2(1 + \gamma_b)]$ [bits/s]~\cite[Eq.~(4.4)]{Goldsmith2005}, with the expectation over the bit-SNR distribution in Eq.~\eqref{eq_bit_snr}. As $C_{\mathrm{erg}}$ is a Gaussian-input quantity, whereas BPSK carries at most one bit per channel use, it is an idealized upper bound: the true capacity satisfies $C\leq\min(\Rsym,\,C_{\mathrm{erg}})$.

For the direct (no-RIS) link, the exact Gamma distribution in Eq.~\eqref{eq_noris_dist} admits an exact closed-form ergodic capacity expression.

\begin{lemma}[Capacity for Gamma-distributed SNR]
\label{lem_capacity_gamma_series}
Let $\gamma_b \sim \operatorname{Gamma}(\zeta, \eta)$ with shape $\zeta$ and scale $\eta$.
For integer $\zeta$, the per-symbol ergodic capacity has the following closed form:
\begin{align}
	&C_{\mathrm{erg,Gamma}}(\zeta,\eta)
	=~\frac{1}{\Gamma(\zeta)\ln 2} \textstyle \sum_{i=0}^{\zeta-1}\frac{(\zeta-1)!}{(\zeta-i-1)!} \nonumber \\
	&\times\biggl[ \frac{(-1)^{\zeta-i-2}}{\eta^{\zeta-i-1}} e^{1/\eta} \operatorname{Ei} \Bigl( -\frac{1}{\eta} \Bigr) +\! \sum_{k=1}^{\zeta-i-1}\frac{(k-1)!}{(-\eta)^{\zeta-i-k-1}}\biggr],
	\label{eq_capacity_ei}
\end{align}
where $\operatorname{Ei}(\cdot)$ is the exponential integral and $\Gamma(\zeta)$ the Gamma function~\cite[Eq.~(2.98)]{Barkat2005}; the summation index $i$ is local to Eq.~\eqref{eq_capacity_ei}, unrelated to the radar-return index of Eq.~\eqref{eq_four_returns}.
\end{lemma}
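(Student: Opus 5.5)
The starting point is the definition of per-symbol ergodic capacity, $C=\mathbb E[\log_2(1+\gamma_b)]$, written against the Gamma density $f(x)=x^{\zeta-1}e^{-x/\eta}/(\Gamma(\zeta)\eta^{\zeta})$. So
\begin{align*}
C_{\mathrm{erg,Gamma}}(\zeta,\eta)=\frac{1}{\Gamma(\zeta)\eta^{\zeta}\ln 2}\int_0^\infty \ln(1+x)\,x^{\zeta-1}e^{-x/\eta}\,dx .
\end{align*}
Since $\zeta$ is an integer, $x^{\zeta-1}e^{-x/\eta}$ has an elementary antiderivative. I will integrate by parts with $u=\ln(1+x)$ and $dv=x^{\zeta-1}e^{-x/\eta}dx$. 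The antiderivative comes from the finite expansion
\begin{align*}
\int x^{\zeta-1}e^{-x/\eta}dx=-e^{-x/\eta}\sum_{i=0}^{\zeta-1}\frac{(\zeta-1)!}{(\zeta-1-i)!}\eta^{i+1}x^{\zeta-1-i}.
\end{align*}
This choice explains the outer sum over $i$ and the factor $(\zeta-1)!/(\zeta-i-1)!$ in Eq.~\eqref{eq_capacity_ei}. The boundary terms vanish: at $0$ because $\ln(1+0)=0$, and at $\infty$ because of the exponential decay. What remains is a finite sum of integrals of the form $\int_0^\infty x^{n}e^{-x/\eta}/(1+x)\,dx$ with $n=\zeta-1-i$.

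Each such integral is handled by polynomial division of $x^n$ by $1+x$:
\begin{align*}
\frac{x^n}{1+x}=\frac{(-1)^n}{1+x}+\sum_{k=1}^{n}(-1)^{n-k}x^{k-1}.
\end{align*}
The polynomial part integrates to Gamma integrals, $\int_0^\infty x^{k-1}e^{-x/\eta}dx=(k-1)!\,\eta^{k}$. These produce the inner finite sum of $(k-1)!$ terms with alternating signs. The pole part uses the standard identity
\begin{align*}
\int_0^\infty \frac{e^{-x/\eta}}{1+x}dx=-e^{1/\eta}\operatorname{Ei}(-1/\eta),
\end{align*}
obtained by substituting $t=1+x$. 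This produces the $e^{1/\eta}\operatorname{Ei}(-1/\eta)$ term. The sign $-(-1)^{n}=(-1)^{\zeta-i-2}$ appears after one combines it with the overall minus sign from the integration by parts.

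The final step collects the powers of $\eta$. Multiplying by the prefactor $\eta^{-\zeta}$ and by $\eta^{i+1}$ from the antiderivative turns the Ei term into $\eta^{-(\zeta-i-1)}$ and the $k$-th polynomial term into $\eta^{-(\zeta-i-k-1)}$, with the appropriate sign. This should reproduce $(k-1)!/(-\eta)^{\zeta-i-k-1}$ exactly.

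I expect the main obstacle to be this sign and exponent bookkeeping, not the analysis itself. The plan is to check the consolidated expression against the case $\zeta=1$, where the result must reduce to the known Rayleigh result $-e^{1/\eta}\operatorname{Ei}(-1/\eta)/\ln 2$. A second check is $\zeta=2$, computed directly.
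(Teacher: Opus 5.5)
Your argument is correct, but it takes a different route from the paper. The paper substitutes $u=x/\eta$ to reach Eq.~\eqref{eq_capacity_integral} and then directly invokes the tabulated integral \cite[Eq.~(4.337.5)]{Gradshteyn2007}. You keep the unnormalized variable and in effect re-derive that table entry from three pieces:
\begin{itemize}
\item integration by parts against the finite antiderivative of $x^{\zeta-1}e^{-x/\eta}$, which produces the outer sum and the factor $(\zeta-1)!/(\zeta-i-1)!$;
\item long division of $x^{p}$ by $1+x$ with $p=\zeta-1-i$, which produces the inner sum;
\item the identity $\int_0^\infty e^{-x/\eta}/(1+x)\,\mathrm{d}x=-e^{1/\eta}\operatorname{Ei}(-1/\eta)$, which produces the pole term.
\end{itemize}

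The bookkeeping closes. After the factor $\eta^{i+1-\zeta}=\eta^{-p}$, each reduced integral becomes $\frac{(-1)^{p-1}}{\eta^{p}}e^{1/\eta}\operatorname{Ei}(-1/\eta)+\sum_{k=1}^{p}\frac{(k-1)!}{(-\eta)^{p-k}}$. This is exactly the bracket in Eq.~\eqref{eq_capacity_ei}, and your $\zeta=1$ and $\zeta=2$ checks go through.

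One correction to your sign narrative. The minus from integration by parts cancels the minus in your antiderivative, so the reduced integrals enter with a plus sign. The minus in $-(-1)^{p}=(-1)^{\zeta-i-2}$ is the one from the $\operatorname{Ei}$ identity. If you also counted the integration-by-parts minus, you would flip the sign of the $\operatorname{Ei}$ term.

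On the trade-off: the paper's proof is a one-line lookup that depends on correctly matching the table's parameters and its $\operatorname{Ei}$ sign convention. Yours is longer but checkable from scratch, and it shows where each piece of the formula comes from. It also makes explicit why integer $\zeta$ is needed, since the antiderivative and the division terminate only then. That is consistent with the paper's quadrature fallback in Corollary~\ref{cor_capacity_ris_approx}.
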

\begin{proof}
Starting from $C_{\mathrm{erg,Gamma}}=\mathbb E[\log_2(1+\gamma_b)]$ with the Gamma density $f_{\gamma_b}(x) = \frac{x^{\zeta-1} e^{-x/\eta}}{\Gamma(\zeta)\eta^\zeta}$~\cite[Eq.~(2.102)]{Barkat2005}, substitute $u = x/\eta$ to obtain
\begin{equation}
	\textstyle C_{\mathrm{erg,Gamma}}(\zeta,\eta)
	=\frac{1}{\Gamma(\zeta)}\!\int_{0}^{\infty}\!\log_2(1+\eta u)\,
	u^{\zeta-1}e^{-u}\,\mathrm{d}u.
	\label{eq_capacity_integral}
\end{equation}
Applying~\cite[Eq.~(4.337.5)]{Gradshteyn2007} directly to Eq.~\eqref{eq_capacity_integral} for integer $\zeta$ yields the closed form in Eq.~\eqref{eq_capacity_ei}.
\end{proof}

For the direct link, Lemma~\ref{lem_capacity_gamma_series} can be applied to the exact Gamma distribution in Eq.~\eqref{eq_noris_dist}.

\begin{corollary}[Direct-link ergodic capacity (exact)]
\label{cor_capacity_direct}
For the no-RIS case, the ergodic capacity is computed as follows:
\begin{align}
    \textstyle C_{\mathrm{erg},0} = \Rsym\, C_{\mathrm{erg,Gamma}}\!\left(m_{\mathsf{RU}}, \frac{\overline\gamma\beta_{\mathsf{RU}}}{m_{\mathsf{RU}}}\right)\text{ [bits/s]}.
    \label{eq_capacity_direct_exact}
\end{align}
For integer $m_{\mathsf{RU}}$, $C_{\mathrm{erg,Gamma}}$ is evaluated using the closed form in Eq.~\eqref{eq_capacity_ei}; for arbitrary real $m_{\mathsf{RU}}>0$, the exact integral representation in Eq.~\eqref{eq_capacity_integral} applies.
\end{corollary}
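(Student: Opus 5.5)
The corollary follows by specialising Lemma~\ref{lem_capacity_gamma_series} to the exact no-RIS distribution in Eq.~\eqref{eq_noris_dist} and multiplying by the symbol rate.

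\textbf{Step 1 (distribution).} Without the RIS, $A_c=\sqrt{\beta_{\mathsf{RU}}}\,a_{\mathsf{RU}}$ with $a_{\mathsf{RU}}\sim\operatorname{Nakagami}(m_{\mathsf{RU}},1)$. The square of a unit-spread Nakagami variable is $\operatorname{Gamma}(m,1/m)$, so scaling by $\overline\gamma\beta_{\mathsf{RU}}$ gives
$\gamma_{b,0}\sim\operatorname{Gamma}(m_{\mathsf{RU}},\overline\gamma\beta_{\mathsf{RU}}/m_{\mathsf{RU}})$. This is exactly Eq.~\eqref{eq_noris_dist}, and it holds with no approximation. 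I would note this explicitly, since it is what justifies the label ``exact'', in contrast with the moment-matched fit in Eq.~\eqref{eq_gamma_fit}.

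\textbf{Step 2 (definition).} By the definition in Section~\ref{subsec_capacity_def},
$C_{\mathrm{erg},0}=\Rsym\,\mathbb E[\log_2(1+\gamma_{b,0})]$.
The symbol rate $\Rsym$ is a deterministic constant, so it factors out of the expectation. The remaining expectation is, by definition, $C_{\mathrm{erg,Gamma}}(\zeta,\eta)$ evaluated at $\zeta=m_{\mathsf{RU}}$ and $\eta=\overline\gamma\beta_{\mathsf{RU}}/m_{\mathsf{RU}}$. This yields Eq.~\eqref{eq_capacity_direct_exact}.

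\textbf{Step 3 (evaluation).} For integer $m_{\mathsf{RU}}$, invoke Lemma~\ref{lem_capacity_gamma_series} directly to obtain the closed form in Eq.~\eqref{eq_capacity_ei}.

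For arbitrary real $m_{\mathsf{RU}}>0$, the Gradshteyn identity behind Eq.~\eqref{eq_capacity_ei} no longer applies. However, the substitution step in the proof of Lemma~\ref{lem_capacity_gamma_series}, which leads to Eq.~\eqref{eq_capacity_integral}, uses only the Gamma density and never integrality of $\zeta$. I would check that this integral converges for every $\zeta>0$. Near $u=0$, $\log_2(1+\eta u)\sim \eta u/\ln 2$, so the integrand behaves like $u^{\zeta}$, which is integrable. At $u\to\infty$, the factor $e^{-u}$ dominates the logarithmic and polynomial growth. Hence Eq.~\eqref{eq_capacity_integral} is an exact representation for all real $m_{\mathsf{RU}}>0$.

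\textbf{Main obstacle.} There is no real difficulty here. The only care points are confirming that the Nakagami-squared scale parameter is $\beta_{\mathsf{RU}}/m_{\mathsf{RU}}$, so that the Gamma scale matches the parameterisation used in Lemma~\ref{lem_capacity_gamma_series}, and confirming that the integral representation remains valid for non-integer shape.
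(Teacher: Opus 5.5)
Your proposal is correct and takes the paper's route: the paper also specializes Lemma~\ref{lem_capacity_gamma_series} to the exact no-RIS law in Eq.~\eqref{eq_noris_dist} and scales by $\Rsym$, using Eq.~\eqref{eq_capacity_ei} for integer $m_{\mathsf{RU}}$ and Eq.~\eqref{eq_capacity_integral} otherwise. Your extra checks are ones the paper leaves implicit: that $\overline\gamma\beta_{\mathsf{RU}}a_{\mathsf{RU}}^2$ is exactly $\operatorname{Gamma}(m_{\mathsf{RU}},\overline\gamma\beta_{\mathsf{RU}}/m_{\mathsf{RU}})$, and that the integral converges for every real shape $\zeta>0$.
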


For the RIS-assisted case, the SNR $\gamma_b = \overline\gamma A_c^2$ is not exactly Gamma distributed, but is well approximated via the moment-matched Gamma fit in Eq.~\eqref{eq_gamma_fit}:

\begin{corollary}[RIS-assisted ergodic capacity (moment-matched)]
\label{cor_capacity_ris_approx}
Under the communication-optimized RIS configuration, the ergodic capacity is approximated as
\begin{align}
    C_{\mathrm{erg,RIS}} \approx \Rsym\, C_{\mathrm{erg,Gamma}}\!\left(\zeta_\gamma, \eta_\gamma\right)\text{ [bits/s]},
    \label{eq_capacity_ris_approx}
\end{align}
where $\zeta_\gamma, \eta_\gamma$ are from Eq.~\eqref{eq_gamma_fit} and $C_{\mathrm{erg,Gamma}}$ is evaluated using the integral form in Eq.~\eqref{eq_capacity_integral} by adaptive quadrature, since $\zeta_\gamma$ is generally non-integer.
\end{corollary}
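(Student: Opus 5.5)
The plan is to obtain Corollary~\ref{cor_capacity_ris_approx} by combining the definition of $C_{\mathrm{erg}}$ in Section~\ref{subsec_capacity_def} with Lemma~\ref{lem_capacity_gamma_series}. The single approximation step is replacing the true law of $\gamma_b=\overline\gamma A_c^2$ by the moment-matched Gamma law in Eq.~\eqref{eq_gamma_fit}.

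\textbf{Step 1 (starting expression).} Begin from $C_{\mathrm{erg,RIS}}=\Rsym\,\mathbb E[\log_2(1+\gamma_b)]$, where $\gamma_b$ is given by Eq.~\eqref{eq_bit_snr} and $A_c$ by Eq.~\eqref{eq_effective_amplitude}.

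\textbf{Step 2 (moment matching).} Compute the raw moments $\mathcal A_{L,2}$ and $\mathcal A_{L,4}$ exactly. They follow from the independent components $X_0,\dots,X_L$ via the cited amplitude-moment recursion, using the Nakagami moments $\mathbb E[a^r]=\Gamma(m+r/2)/(\Gamma(m)m^{r/2})$.

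Then set $\mu_\gamma=\overline\gamma\mathcal A_{L,2}$ and $\sigma_\gamma^2=\overline\gamma^2(\mathcal A_{L,4}-\mathcal A_{L,2}^2)$. Solving $\zeta\eta=\mu_\gamma$ and $\zeta\eta^2=\sigma_\gamma^2$ gives $\zeta_\gamma$ and $\eta_\gamma$ exactly as in Eq.~\eqref{eq_gamma_fit}.

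\textbf{Step 3 (apply the Gamma capacity).} Replace the density of $\gamma_b$ by $\operatorname{Gamma}(\zeta_\gamma,\eta_\gamma)$ inside the expectation. The change of variables $u=x/\eta_\gamma$ used in the proof of Lemma~\ref{lem_capacity_gamma_series} is valid for any real shape $\zeta>0$, because it does not use integrality. This yields Eq.~\eqref{eq_capacity_integral} with $(\zeta_\gamma,\eta_\gamma)$, and multiplying by $\Rsym$ gives Eq.~\eqref{eq_capacity_ris_approx}.

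\textbf{Step 4 (numerical evaluation).} Since $\zeta_\gamma$ is generally non-integer, the closed form in Eq.~\eqref{eq_capacity_ei} does not apply, and the integral must be evaluated by quadrature. The integral is finite: $\log_2(1+\eta u)\le \log_2(1+\eta)+\log_2(1+u)$, and $u^{\zeta-1}e^{-u}$ is integrable for $\zeta>0$. The integrand is smooth on $(0,\infty)$, so adaptive quadrature converges.

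\textbf{Main obstacle.} The difficulty is justifying the ``$\approx$'' rather than computing anything. I would argue heuristically, as in Eq.~\eqref{eq_gamma_fit}, that $A_c$ is a sum of $L+1$ independent terms. A central-limit effect then makes $A_c$, and hence $A_c^2$, concentrate so that the first two moments dominate.

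To control the error, I would bound $\log_2(1+x)$ by a second-order Taylor expansion around $\mu_\gamma$. Both distributions share the mean and variance, so the error is governed by third-moment discrepancies. These discrepancies shrink relative to $\mu_\gamma^2$ as $L$ grows. I would not aim for a rigorous bound, and would instead defer validation to the Monte Carlo comparison in Section~\ref{sec_results}.
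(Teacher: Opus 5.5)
Your proposal is correct and follows the paper's route. The corollary is just the moment-matched $\operatorname{Gamma}(\zeta_\gamma,\eta_\gamma)$ law of Eq.~\eqref{eq_gamma_fit} inserted into the integral representation Eq.~\eqref{eq_capacity_integral} of Lemma~\ref{lem_capacity_gamma_series}, whose substitution $u=x/\eta$ needs no integrality of the shape; the ``$\approx$'' rests only on the heuristic that $A_c$ aggregates $L+1$ independent terms and on the Monte Carlo check in Fig.~\ref{fig_capacity}. Your finiteness argument and Taylor-remainder heuristic go slightly beyond the paper, but one correction is needed there. The leading error (in nats) is the third-central-moment discrepancy divided by $3(1+\mu_\gamma)^3$, which decays like $L^{-2}$; measured relative to $\mu_\gamma^2$, as you wrote, that discrepancy stays of the same order in $L$ rather than shrinking.
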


While the evaluations in Eq.~\eqref{eq_capacity_ei} and \eqref{eq_capacity_integral} are exact, closed-form bounds provide intuitive insights and require only mean and variance.

\begin{lemma}[Jensen's inequality upper bound]
\label{lem_jensen_bound}
For any SNR distribution with finite mean $\bar\gamma = \mathbb E[\gamma_b]$,
\begin{align}
    C_{\mathrm{erg}} \leq C_{\mathrm{erg,Jensen}} \triangleq \Rsym\log_2(1 + \bar\gamma).
    \label{eq_jensen_bound}
\end{align}
\end{lemma}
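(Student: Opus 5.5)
The plan is a direct application of Jensen's inequality to the concave map $g(x)=\log_2(1+x)$ on $[0,\infty)$. First, we note that $\gamma_b=\overline\gamma A_c^2\ge 0$ almost surely by Eq.~\eqref{eq_bit_snr}. Hence $g(\gamma_b)$ is well defined and nonnegative, and $\mathbb E[g(\gamma_b)]$ exists in $[0,\infty]$. Next, we verify concavity: $g''(x)=-1/\bigl((1+x)^2\ln 2\bigr)<0$ for $x>-1$, so $g$ is strictly concave on the support of $\gamma_b$.

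The second step handles integrability, which is the only point that needs care. Since $\ln(1+x)\le x$, we have $0\le g(\gamma_b)\le \gamma_b/\ln 2$. Finite $\mathbb E[\gamma_b]=\bar\gamma$ therefore makes $g(\gamma_b)$ integrable. Rather than quote Jensen's inequality abstractly, we will use the tangent-line form, which avoids any measure-theoretic subtlety. For all $x\ge 0$,
\begin{equation*}
g(x)\le g(\bar\gamma)+g'(\bar\gamma)(x-\bar\gamma),\qquad g'(\bar\gamma)=\frac{1}{(1+\bar\gamma)\ln 2}.
\end{equation*}
Taking expectations, the linear term vanishes because $\mathbb E[\gamma_b-\bar\gamma]=0$. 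This gives $\mathbb E[\log_2(1+\gamma_b)]\le\log_2(1+\bar\gamma)$.

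Finally, we multiply by the deterministic symbol rate $\Rsym=1/T_{\mathrm{PRI}}$. From the definition $C_{\mathrm{erg}}=\Rsym\,\mathbb E[\log_2(1+\gamma_b)]$ in Section~\ref{subsec_capacity_def}, this yields Eq.~\eqref{eq_jensen_bound}. Strict concavity also shows that equality holds only when $\gamma_b$ is almost surely constant, so the bound is strict under nondegenerate Nakagami fading.

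We would conclude by noting how the mean is evaluated in each case. For the no-RIS case of Eq.~\eqref{eq_noris_dist}, $\bar\gamma=\overline\gamma\beta_{\mathsf{RU}}$. For the RIS case, $\bar\gamma=\mu_\gamma=\overline\gamma\mathcal A_{L,2}$, which is available from the moment recursion of Section~\ref{subsec_snr_amplitude}. The bound therefore needs only the mean. No step here is a genuine obstacle; the integrability remark is the only item beyond routine.
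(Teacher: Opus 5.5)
Your proof is correct and follows the paper's route: the paper simply invokes Jensen's inequality for the concave map $\log_2(1+x)$ and notes that equality holds only for deterministic $\gamma_b$. Your tangent-line derivation, the integrability check via $\ln(1+x)\le x$, and the strict-concavity equality argument just spell out the standard proof of the Jensen step the paper cites.
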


\begin{proof}
By Jensen's inequality for the concave $\log_2(1+x)$~\cite[Eq.~(5.14)]{Tse2005}, $\mathbb E[\log_2(1 + \gamma_b)]\leq \log_2(1 + \mathbb E[\gamma_b])$, with equality only for deterministic $\gamma_b$~\cite[Eq.~(5.92)]{Tse2005}.
\end{proof}

For the RIS system with mean $\bar\gamma = \overline\gamma \mathcal A_{L,2}$:
\begin{align}
    C_{\mathrm{erg,Jensen}} = \Rsym\log_2\left(1 + \overline\gamma \mathcal A_{L,2}\right),
    \label{eq_jensen_ris}
\end{align}
with $\mathcal A_{L,2}$ from the amplitude-moment recursion of \cite[Lemma~4]{Triwidyastuti2026arxiv}. Since
$
	\mathcal A_{L,2}=\mathbb E[A_c^2]=\sum_{i=0}^L\mu_{i,2}+2\sum_{0\leq i<j\leq L}\mu_{i,1}\mu_{j,1},
$
the bound in Eq.~\eqref{eq_jensen_ris} depends on the fading only through $\mathbb E[\gamma_b]=\overline\gamma\mathcal A_{L,2}$.

\section{Numerical Results}
\label{sec_results}

Monte Carlo simulations were used to validate the analysis. The key settings are listed in Table~\ref{tab_params}: the communication links underwent Nakagami-$m$ fading, and the RIS applied the communication-optimized phases in Eq.~\eqref{eq_comm_ris_phase}.

\begin{table}[t]
\centering
\footnotesize
\renewcommand{\arraystretch}{1.0}
\caption{Key Simulation Settings}
\label{tab_params}
\begin{tabular}{@{}ll@{}}
\toprule
Parameter & Value \\
\midrule
Carrier frequency $f_c$ (wavelength $\lambda$)         & $24$~GHz ($12.5$~mm)\\
MLS order / length $k_{\mathrm{MLS}}$ / $N$              & $7$ / $127$\\
Pulses per CPI $M$ / pilots $M_{\mathrm{pl}}$           & $64$ / $8$\\
Chip rate / bandwidth $B$                               & $20$~MHz\\
Chip / PRI / CPI duration                              & $50$~ns / $6.35\,\mu$s / $406\,\mu$s\\
Reference distance $d_0$                               & $100$~m\\
Path-loss exp. $\alpha_{\mathsf{RU}}$ / $\alpha_{\mathsf{RI}},\alpha_{\mathsf{IU}}$ & $2.2$ / $2.0$\\
Path-loss exp. $\alpha_{\mathsf{RT}},\alpha_{\mathsf{IT}},\alpha_{\mathsf{RS}}$ & $2.0$\\
Ref. gains $\Psi_{\mathsf{RU}}$ / $\Psi_{\mathsf{RI}},\Psi_{\mathsf{IU}}$ & $0.5$ / $0.02$\\
Ref. gains $\Psi_{\mathsf{RT}}$ / $\Psi_{\mathsf{IT}}$ ($\Psi_{\mathsf{RS}}{=}\Psi_{\mathsf{RT}}$) & $1.58{\times}10^{-5}$ / $4.99{\times}10^{-8}$\\
Nakagami $m_{\mathsf{RU}},m_{\mathsf{RI}},m_{\mathsf{IU}}$ & $2,\,3,\,2$\\
RIS elements $L$ / spacing / $\kappa_l$                & $16,64$ / $\lambda/2$ / $0.85$\\
$\mathsf R,\mathsf I,\mathsf U$ positions               & $(0,0),(200,60),(380,0)$\\
Target $\mathsf T$ pos.\ / vel.\ $\mathbf v_{\mathsf T}$ & $(560,280)$~m / $(28,-18)$~m/s\\
Target RCS $\sigma$ & $1$~m$^2$ \\
Scatterers $N_{\mathrm{scat}}$ / RCS $\sigma_{s,i}$ [m$^2$] & $4$ / $1$\\
$\mathsf S_i$ positions & \makecell[l]{$(28,12),(50,-25)$,\\$(68,30),(82,-8)$}\\
Slow-time window $w[m]$ ($L_w$) & Hann ($\approx\!0.66$)\\
CFAR $T_r,T_d$ / $G_r,G_d$ / $P_{\mathrm{FA}}$ & $4,4$ / $1,1$ / $10^{-4}$\\
\bottomrule
\end{tabular}
\end{table}

\begin{figure}[t]
    \centering
    \includegraphics[width=\columnwidth,height=0.6\columnwidth]{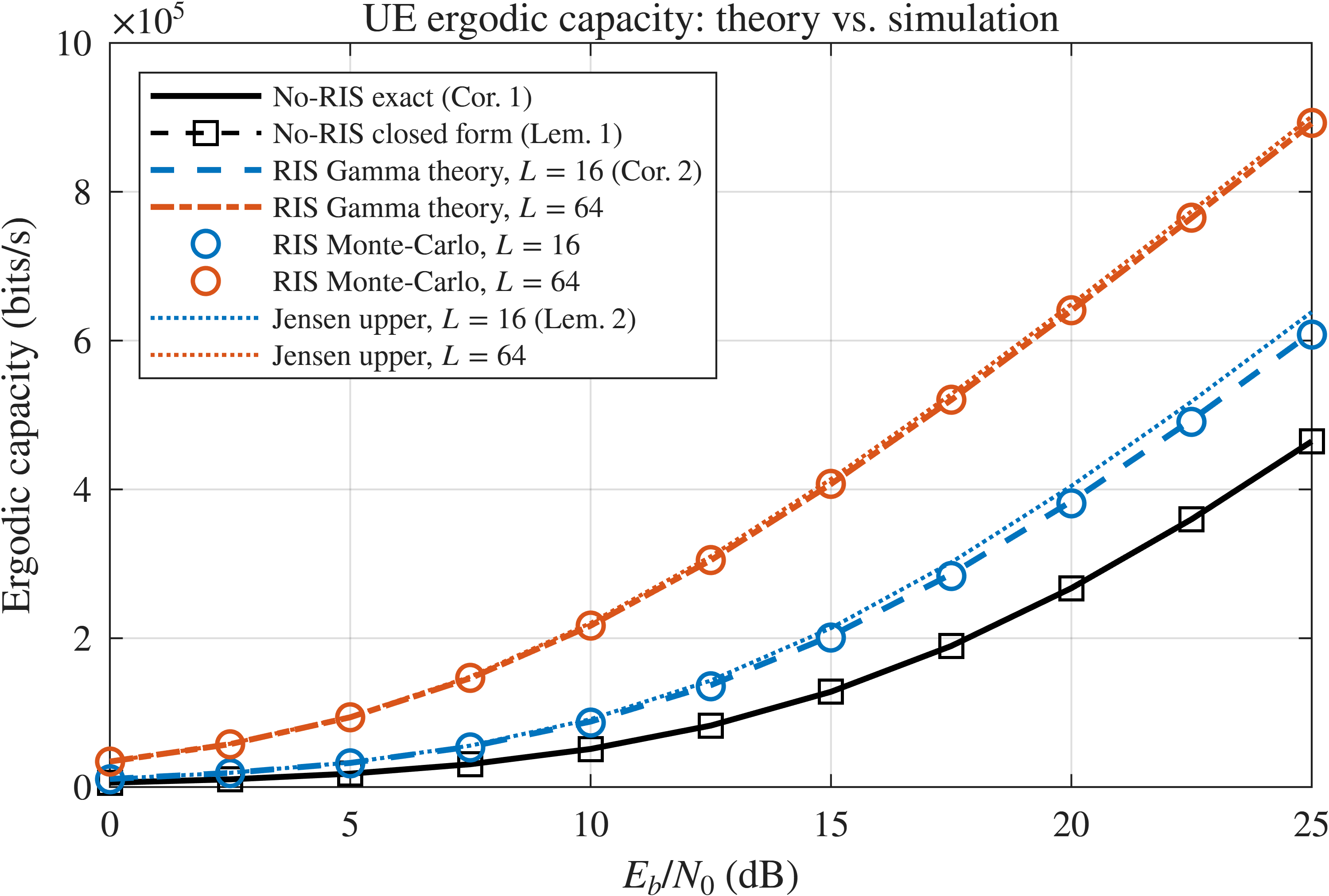}
    \caption{UE ergodic capacity vs. $E_b/N_0$: the no-RIS baseline in Eq.~\eqref{eq_capacity_direct_exact} (line), cross-validated against the closed form in Eq.~\eqref{eq_capacity_ei} (squares); and the RIS-assisted Gamma theory in Eq.~\eqref{eq_capacity_ris_approx} (lines) vs. Monte Carlo simulation (circles) for $L=16,64$, with the Jensen bound in Eq.~\eqref{eq_jensen_bound} (dotted) for both values of $L$.}
    \label{fig_capacity}
\end{figure}

Fig.~\ref{fig_capacity} cross-validates the exact no-RIS baseline in Eq.~\eqref{eq_capacity_direct_exact} against the Ei-based closed form in Eq.~\eqref{eq_capacity_ei} (both are exact evaluations for the same Gamma-distributed no-RIS SNR) and compares the moment-matched Gamma theory in Eq.~\eqref{eq_capacity_ris_approx} with Monte Carlo averages of $\Rsym\log_2(1+\gamma_b)$, where $\Rsym\approx0.16$~Mbit/s. The Gamma theory closely tracked the simulation for both $L=16$ and $L=64$; in both cases, the simulation remained below the Jensen bound in Eq.~\eqref{eq_jensen_bound}.
Fig.~\ref{fig_rdmap} shows the range-Doppler map in Eq.~\eqref{eq_rd_map} for all returns in Eq.~\eqref{eq_radar_echo}. The RIS-assisted echoes did not coherently reinforce the target, and the $L^2$ gain delivered to the UE did not aid sensing. Meanwhile, the near-radar scatterers and the RIS self-return merged into the zero-Doppler clutter ridge anticipated in Section~\ref{sec_receiver}.

\begin{figure}[t]
    \centering
    \includegraphics[width=\columnwidth,height=0.6\columnwidth]{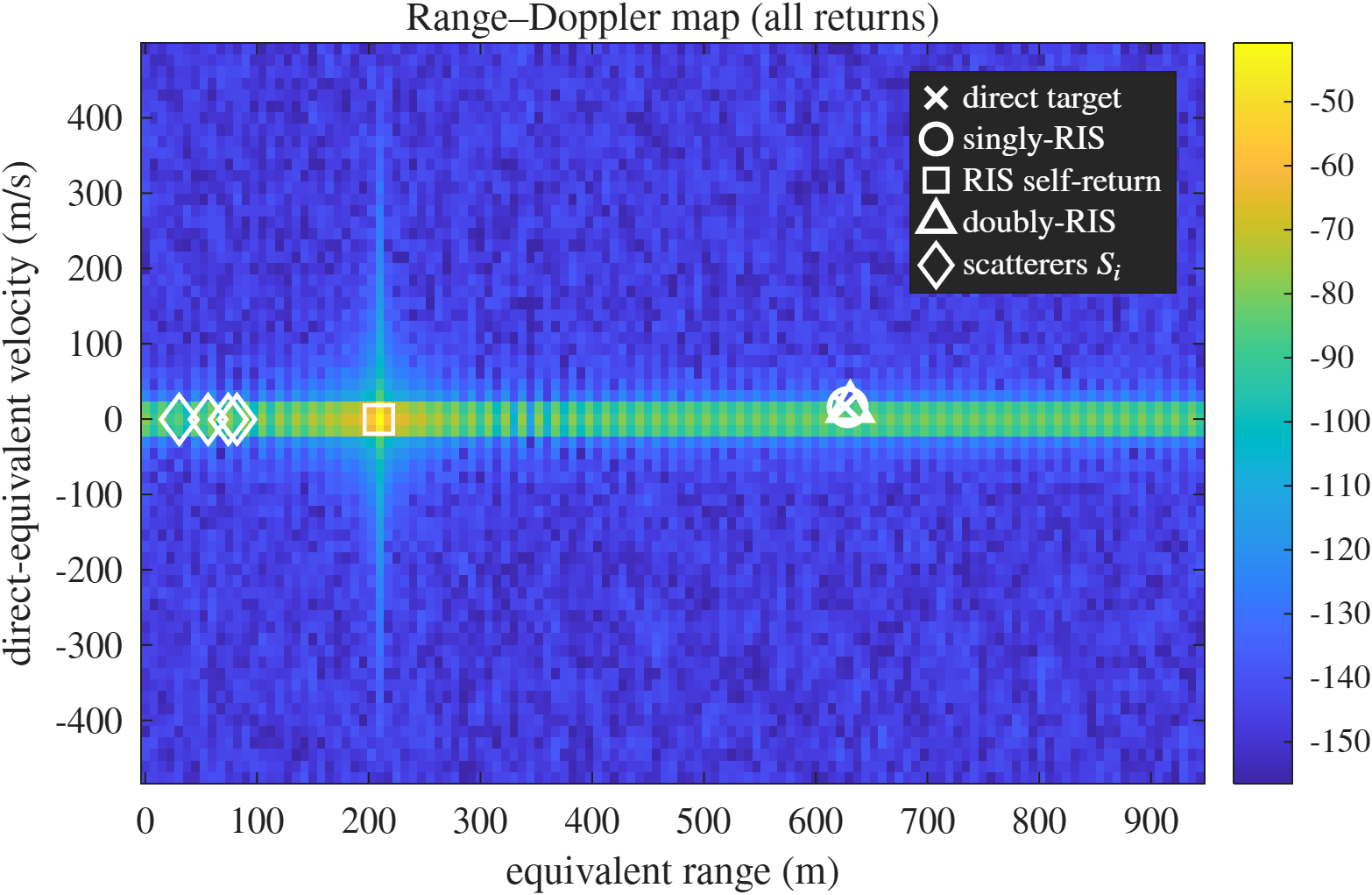}
    \caption{RD map of the composite radar return, with expected return locations marked: the target-related returns cluster together, while the RIS self-return and the near-radar scatterers $\mathsf{S}_i$ form the zero-Doppler clutter ridge.}
    \label{fig_rdmap}
\end{figure}

\begin{figure}[t]
    \centering
    \subfloat[]{\includegraphics[width=0.49\columnwidth]{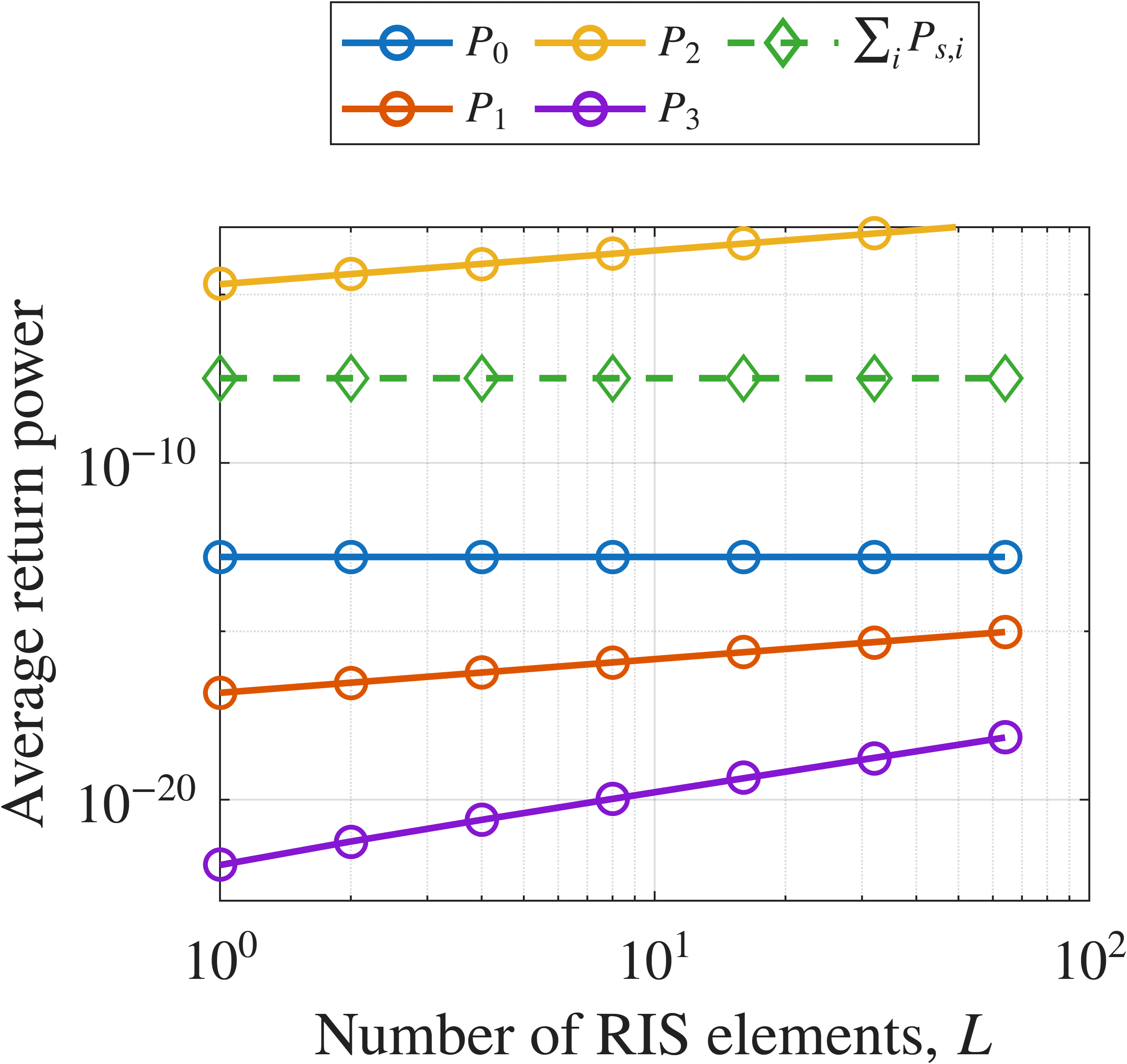}%
        \label{fig_return_power}}
    \hfil
    \subfloat[]{\includegraphics[width=0.49\columnwidth]{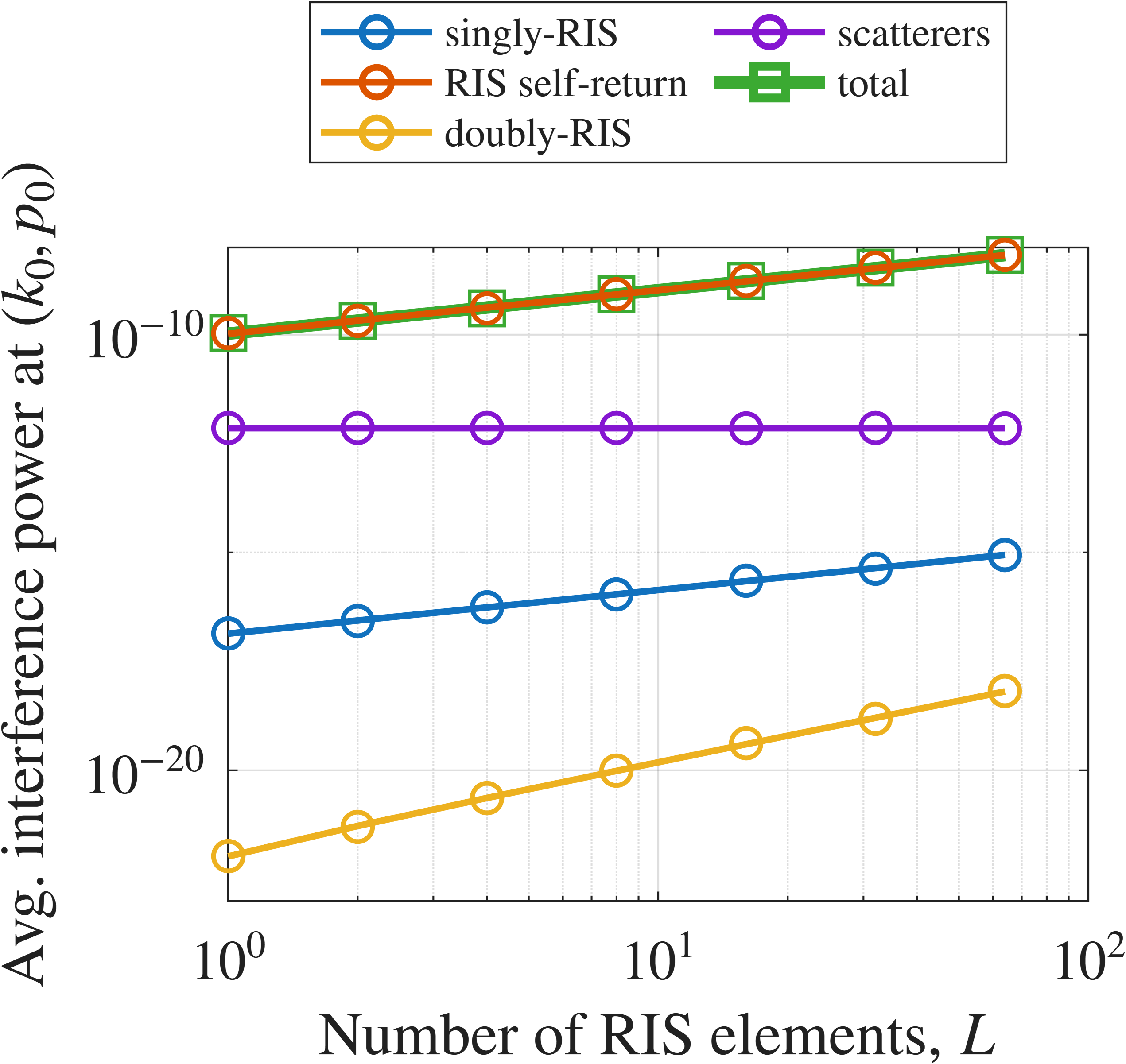}%
        \label{fig_direct_cell_interference}}
    \caption{(a) Average radar-return powers and (b) average interference at the direct range-Doppler cell vs. $L$: analysis (lines) and simulation (markers).}
    \label{fig_power_interference}
\end{figure}

Fig.~\ref{fig_return_power} validates the return-power expressions in Section~\ref{subsec_sensing_metrics}: $P_0$ was independent of $L$; $P_1$ and $P_2$ grew linearly; and $P_3$ grew quadratically. The total scatterer power $\sum_i P_{s,i}$ in Eq.~\eqref{eq_scatterer_power} was constant with respect to $L$ yet exceeded the distant direct-target return by roughly $53$~dB because of the near-radar two-way gain $d_{\mathsf{RS}_i}^{-2\alpha}$.
Fig.~\ref{fig_direct_cell_interference} weights each power by its range and Doppler leakage at the direct cell, thereby separating a return's physical power from the fraction that actually caused interference; the analytical model closely follows the simulation. The zero-Doppler RIS self-return dominated, whereas the much stronger raw scatterer clutter was suppressed by the MLS sidelobe factor $1/N^2$ in Eq.~\eqref{eq_direct_cell_interference} and contributed about $34$~dB less. (Monte Carlo values near $10^{-22}$ are averages of continuous-valued powers and are well within double precision.)

\begin{figure}[t]
    \centering
    \subfloat[]{\includegraphics[width=0.49\columnwidth]{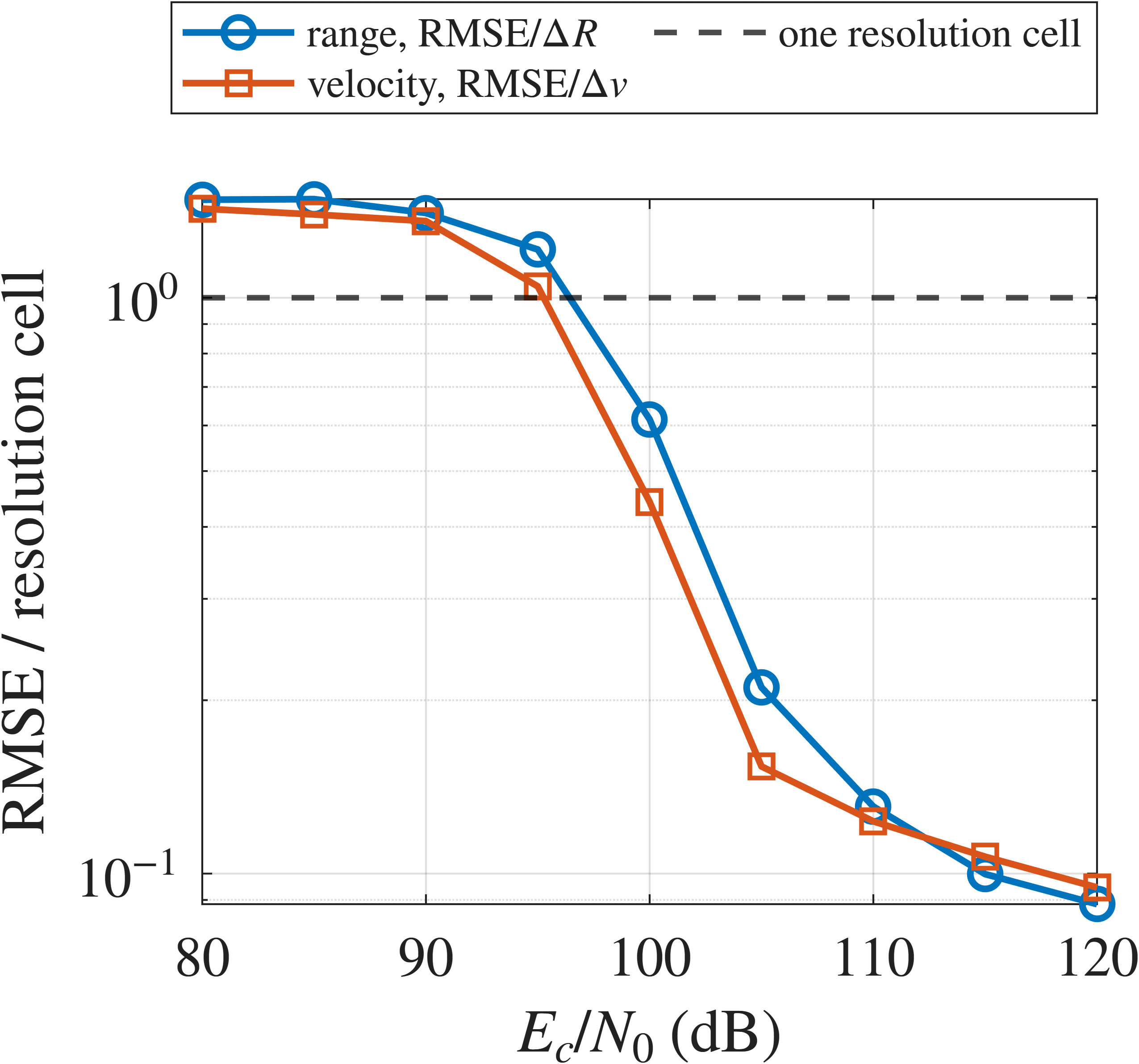}%
        \label{fig_nrmse}}
    \hfil
    \subfloat[]{\includegraphics[width=0.49\columnwidth]{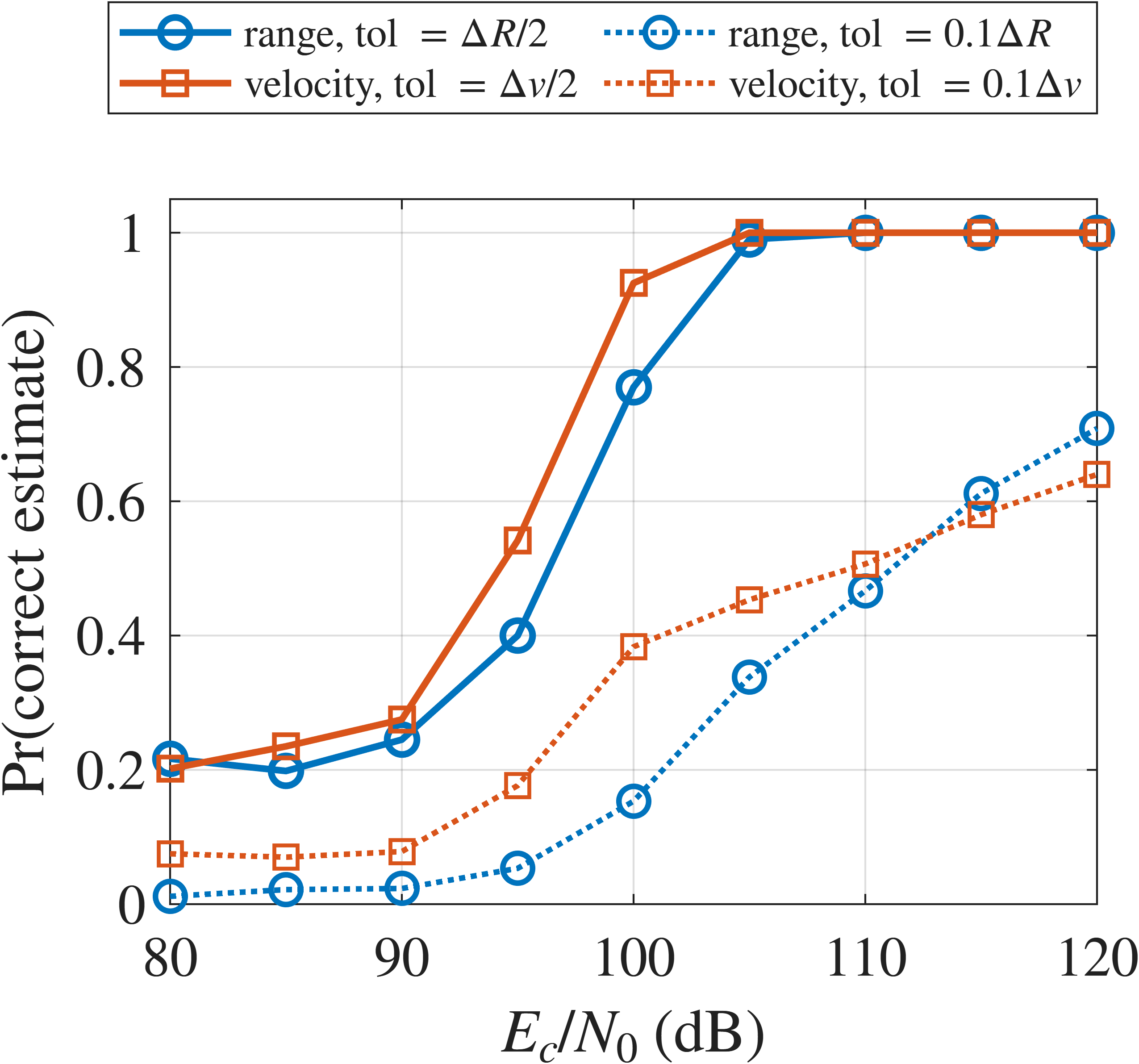}%
        \label{fig_pcorr}}
    \caption{Estimation accuracy vs. $E_c/N_0$: (a) resolution-normalized RMSE \eqref{eq_nrmse}, sub-cell below the dashed line; (b) probability of a correct estimate \eqref{eq_pcorr_R} and \eqref{eq_pcorr_v} at tolerances $\varepsilon=1/2$ (solid) and $0.1$ (dotted).}
    \label{fig_estimation}
\end{figure}

Figs.~\ref{fig_nrmse} and~\ref{fig_pcorr} report the resolution-aware estimation metrics of Section~\ref{subsubsec_estimation_metrics}. Beyond a moderate $E_c/N_0$, $\mathrm{NRMSE}_R$ and $\mathrm{NRMSE}_v$ fell below one cell (sub-resolution accuracy), and the probability of a correct estimate reached one at the half-cell tolerance, while the tight tolerance $\varepsilon=0.1$ required a higher $E_c/N_0$. These results confirm that sub-bin interpolation in Eq.~\eqref{eq_rd_estimates} attains near-exact range and velocity estimates at high SNR.

\section{Conclusion}
\label{sec_conclusion}

In this study, we analyzed a communication-optimized RIS-aided ISAC system~\cite{Triwidyastuti2026arxiv} in a cluttered environment. For the UE link over Nakagami-$m$ fading, we derived an exact no-RIS ergodic-capacity expression, a moment-matched Gamma approximation for the RIS-assisted link, and a Jensen upper bound. For sensing, we derived average return powers and a closed-form average SCNR at the direct range-Doppler cell. The results revealed that all static clutter, both RIS-induced (scaling with $L$) and environmental (determined by geometry), entered through a common zero-Doppler leakage mechanism. Monte Carlo simulation results confirmed the analysis.

\bibliographystyle{IEEEtran}
\bibliography{references}

@misc{Triwidyastuti2026arxiv,
	title={{Communication-Centric RIS-Assisted ISAC: Signal Modeling and BER Analysis}}, 
	author={Yosefine Triwidyastuti and Tri Nhu Do},
	year={2026},
	eprint={2606.28924},
	archivePrefix={arXiv},
	primaryClass={eess.SP},
	url={https://arxiv.org/abs/2606.28924}, 
}

@ARTICLE{VaeziCOMST2026,
	author={Vaezi, Mojtaba and Baduge, Gayan Amarasuriya Aruma and Ollila, Esa and Vorobyov, Sergiy A.},
	journal={IEEE Commun. Surveys Tuts.},
	title={{A Tutorial on AI-Empowered Integrated Sensing and Communications}}, 
	year={2026},
	volume={28},
	number={},
	pages={4980-5013},
	doi={10.1109/COMST.2026.3665143},
	ISSN={1553-877X},
	month={},}

@BOOK{Gradshteyn2007,
	title     = "Table of Integrals, Series, and Products",
	author    = "Gradshteyn, I. S. and Ryzhik, I. M.",
	publisher = "Academic Press",
	edition   = 7,
	year      = 2007,
	address   = "Burlington, MA"
}

@BOOK{Goldsmith2005,
	title     = "Wireless Communications",
	author    = "Goldsmith, Andrea",
	publisher = "Cambridge University Press",
	month     =  aug,
	year      =  2005,
	address   = "Cambridge"
}

@BOOK{Tse2005,
	title     = "Fundamentals of Wireless Communication",
	author    = "Tse, David and Viswanath, Pramod",
	publisher = "Cambridge University Press",
	year      =  2005,
	address   = "Cambridge"
}

@BOOK{Barkat2005,
  title     = "Signal Detection and Estimation",
  author    = "Barkat, Mourad",
  publisher = "Artech House",
  series    = "Radar Library",
  edition   =  2,
  month     =  aug,
  year      =  2005,
  address   = "Norwood, MA",
}

@BOOK{Richards2014,
  title     = "Fundamentals of radar signal processing, second edition",
  author    = "Richards, Mark A",
  publisher = "McGraw-Hill Professional",
  edition   =  2,
  month     =  jan,
  year      =  2014,
  address   = "New York, NY",
  language  = "en"
}

@article{BouzabiaTVT2025,
  title = {{Generative AI-Empowered Resilient Adaptive ISAC Against Adversarial Machine Learning Attacks}},
  volume = {74},
  ISSN = {1939-9359},
  DOI = {10.1109/tvt.2025.3583924},
  number = {12},
  journal = {IEEE Trans. Veh. Technol.},
  publisher = {Institute of Electrical and Electronics Engineers (IEEE)},
  author = {Bouzabia,  Hamda and Do,  Tri Nhu and Kaddoum,  Georges},
  pages = {18984-19000},
  month = Dec,
  year = {2025}
}

@ARTICLE{DoTCOM2021,
	author={Do, Tri Nhu and Kaddoum, Georges and Nguyen, Thanh Luan and da Costa, Daniel Benevides and Haas, Zygmunt J.},
	journal={IEEE Trans. Commun.}, 
	title={{Multi-RIS-Aided Wireless Systems: Statistical Characterization and Performance Analysis}}, 
	year={2021},
	volume={69},
	number={12},
	pages={8641-8658},
	doi={10.1109/TCOMM.2021.3117599},
	ISSN={1558-0857},
	month={Dec},}
\end{document}